\documentclass[11pt]{llncs} 

\pagestyle{plain}

\newcommand{\st}{\ensuremath{\overline{s}}}

\newcommand{\Hgame}{G}

\def\smallromani{\renewcommand{\theenumi}{\roman{enumi}}
\renewcommand{\labelenumi}{(\theenumi)}}

\usepackage{amsfonts, stmaryrd}

\usepackage{sgame}

\usepackage{color}
\usepackage{graphics}

\newtheorem{theorem*}{Theorem}

\usepackage{proof}

\newcommand{\opt}{\ensuremath{O}}
\newcommand{\rat}{\ensuremath{rat}}
\newcommand{\interp}[1]{\ensuremath{\llbracket #1 \rrbracket}}
\newcommand{\interpb}[2]{\ensuremath{\llbracket #1 \mid #2 \rrbracket}}
\newcommand{\implies}{\ensuremath{\rightarrow}}
\newcommand{\eol}{ \vspace{4pt} \\ }

\newcommand{\df}[1]{\textup{\bfseries{\emph{#1}}}}

\title{Proof-theoretic Analysis of Rationality for Strategic Games with Arbitrary Strategy Sets}

\author{Jonathan A. Zvesper \inst{1} \and Krzysztof R. Apt \inst{2,3}}
\institute{Oxford University Computing Laboratory, 
Parks Road, Oxford OX1 3QD, UK \and Centre for
    Mathematics and Computer Science (CWI), Science Park 123, 1098~XG
    Amsterdam, the Netherlands
    \and University of Amsterdam, Science Park 904, 1098~XH Amsterdam, the Netherlands}

\begin{document}

\maketitle

\begin{abstract}
  In the context of strategic games, we provide an axiomatic proof
  of the statement
  \begin{description}
\item[(Imp)] Common knowledge of rationality implies that the players will choose
  only strategies that survive the iterated elimination of strictly
  dominated strategies.
  \end{description}
Rationality here means playing only strategies one believes to be best responses.
This involves looking at two formal languages.
One, $\mathcal{L}_O$, is first-order, and is used to formalise optimality conditions, like avoiding strictly dominated strategies, or playing a best response.
The other, $\mathcal{L}_\nu$, is a modal fixpoint language with expressions for optimality, rationality and belief.
Fixpoints are used to form expressions for common belief and for iterated elimination of non-optimal strategies.

\end{abstract}

\section{Introduction}
\label{sec:intro}

There are two main sorts of solution concepts for strategic games:
\emph{equilibrium} concepts
and what might be called ``\emph{effective}'' concepts.
One interpretation of the equilibrium concepts, for example Nash
equilibrium, tacitly presupposes that a game is played repeatedly
(see, e.g. \cite[page 14]{OR_GT}).
Thus the standard condition for Nash equilibrium in terms of the knowledge or beliefs of the players
\cite{AB_Nash} -- the so-called ``epistemic analysis'' of Nash equilibrium --
includes a requirement that players know the other players' strategy choices.
\begin{figure}[hmt]
\begin{center}
\begin{game}{2}{2}
        & $L$   & $R$ \\
$L$     & $1,1$ & $0,0$ \\
$R$     & $0,0$ & $1,1$
\end{game}
~~~
\begin{game}{2}{2}
        & $L$   & $R$ \\
$U$     & $1,1$ & $1,0$ \\
$D$     & $0,0$ & $0,1$
\end{game}
\end{center}
\caption{\label{fig:twogames} Two strategic games}
\end{figure}

Consider the left-hand game in Figure \ref{fig:twogames},
in which each player has two choices $L$ and $R$ and both players get payoff of $1$ if they coordinate, and $0$ otherwise.
Then there are two Nash equilibria\footnote{A Nash equilibrium in a two-player game is a pair $(s_1,s_2)$ of strategies, one for each player such that $s_1$ is a best response to $s_2$ and vice-versa.}: both play $L$ or both play $R$.
But this does not translate by itself into an effective strategy for either player reasoning in isolation,
without some exogenous information.

In contrast, \emph{effective} solution concepts, for example the
iterated elimination of strictly dominated strategies, are compatible
with such a ``one-shot'' interpretation of the game.  Thus the
epistemic analysis of the iterated elimination of strictly dominated
strategies does not require that the players know each other's
strategy choice.  

A strategy $s_i$ is strictly dominated if there is
an alternative strategy $t_i$ such that no matter what the opponent
does, $t_i$ is (strictly) better for $i$ than $s_i$.  Say that a
player is $sd$-rational if he never plays a strategy that he believes
to be strictly dominated.  What the iterated elimination of strictly
dominated strategies does in general require, see \cite{BB_Survey}, is then
that players have \emph{common true belief} that each other is
rational, that is: they are rational, believe that all are rational,
believe that all believe that all are rational, etc.  

In the right-hand game in Figure \ref{fig:twogames}, the column
player, on first looking at her choices $L$ or $R$ is, superficially,
in the same situation as before: choose $L$ and risk the opponent
playing $D$ or choice $R$ and risk the opponent playing $U$.  However,
this time the row player can immediately dismiss playing $D$ on the
grounds that $U$ will \emph{always} be better, no matter what the
column player does.  So if the column player knows (or
believes) this, then he
cannot \emph{rationally} play $R$, and so must play $L$.

In this paper we study the logical form of epistemic characterisation results of this second kind,
so we give formal proof-theoretic principles to justify some given effective or algorithmic process
in terms of common belief of some form of rationality.
We will introduce two formal languages.
One, $\mathcal{L}_O$, is a first-order language, that can be used to define `optimality conditions'.
Avoiding playing a strictly dominated strategy is an example of an `optimality condition'.
Another one is choosing a best response.

However, as observed in \cite{Apt_ManyFaces} for all such notions there are two versions: `local' and `global'.
Notice that in our informal description of when $s_i$ is strictly
dominated by $t_i$ we did not specify \emph{where} $i$ is allowed to
choose alternative strategies from.  In particular, since we are
thinking of an iterated procedure, if $t_i$ has been eliminated
already then it would seem unreasonable to say that $i$ should
consider it.  That intuition yields the \emph{local} definition; the
\emph{global} definition states the opposite: that player $i$ should
always consider his original strategy set from the full game when
looking to see if a strategy is dominated.  

A motivation for looking at global versions of optimality notions is
that they are often mathematically better behaved.  On finite games
the iterations for various local and global versions
coincide
\cite{Apt07c},
but on
infinite games they can differ.  In a nutshell: an optimality condition
$\phi_i$ for player $i$ is \emph{global} if $i$ does not `forget',
during the iterated elimination process, what strategies he has
available in the whole game.  The distinction is clarified in the
respective definitions in $\mathcal{L}_O$.

An optimality condition $\phi$ induces an optimality operator $O_\phi$
on the complete lattice of restrictions (roughly: the subgames) of a
given game.  Eliminating non-$\phi$-optimal strategies can be seen as
the calculation of a fixpoint of the corresponding operator $O_\phi$.
Furthermore, common belief is characterised as a fixpoint (cf.~Note
\ref{note:expr1} below).  Viewed from the appropriate level of
abstraction, in terms of fixpoints of operators, this connection
between common belief of rationality and the iterated elimination of
non-optimal strategies becomes clear.

We define a language $\mathcal{L}_\nu$ that describes things from this higher level of abstraction.
Each optimality condition defines a corresponding notion of \emph{rationality}, which means playing a strategy that one \emph{believes} to be $\phi$-optimal.
$\mathcal{L}_\nu$ is a modal fixpoint language with modalities for \emph{belief} and \emph{optimality},
and so can express connections between optimality, rationality and (common) belief.

We say that an operator $O$ on an arbitrary lattice $(D,\subseteq)$ is
\df{monotonic} when for all $A,B \in D$, if
$A \subseteq B$ then $O(A) \subseteq O(B)$.
The global versions of relevant optimality operators, in particular of the operators corresponding to the best response
and strict dominance, are monotonic.
This is immediately verifiable in $\mathcal{L}_O$ by observing that
the relevant definition is \emph{positive}.

Our first result is a syntactic proof of the following result, where
$\phi$ is a \emph{monotonic} optimality condition:\footnote{By ``common
  true belief'' we mean a common belief that is correct.  In
  particular, common knowledge entails common true belief.}

\begin{theorem*}
Common true belief of $\phi$-rationality entails all played strategies survive the iterated elimination of non-$\phi$-optimal strategies.
\end{theorem*}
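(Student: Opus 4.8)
The plan is to work entirely in the modal fixpoint language $\mathcal{L}_\nu$ and exploit the two fixpoint characterisations mentioned in the introduction: common belief is the greatest fixpoint of the "everybody believes" operator, and the set of strategy profiles surviving the iterated elimination of non-$\phi$-optimal strategies is the greatest fixpoint of the optimality operator $O_\phi$. Since $\phi$ is assumed monotonic, $O_\phi$ is a monotonic operator on the complete lattice of restrictions of the game, so this greatest fixpoint exists (by Knaster--Tarski) and, crucially, is the largest $\subseteq$-element $X$ with $X \subseteq O_\phi(X)$; this is the coinduction principle I will use. The statement to prove is then that the "common true belief of $\phi$-rationality" set is contained in this greatest fixpoint.

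First I would unwind the definitions: $\phi$-rationality of player $i$ means $i$ plays a strategy he believes to be $\phi$-optimal, and common true belief of rationality is the conjunction of rationality, belief in rationality, belief in belief in rationality, and so on — i.e. a greatest fixpoint of the operator $X \mapsto (\text{everybody is rational and believes } X)$, intersected with correctness (truth) of this belief. Call this set $C$. The key step is then a one-step "closure" lemma: show that $C \subseteq O_\phi(C)$, i.e. that if common true belief of rationality holds then every strategy profile in $C$ is, within the restriction $C$, $\phi$-optimal for each player. This is where monotonicity of $\phi$ does the real work: rationality says each player plays something he \emph{believes} optimal; truth of that belief (from "true belief") lets us transfer "believed optimal in the believed restriction" to "optimal in the actual restriction"; and since $C$ is contained in that actual restriction, monotonicity of $O_\phi$ upgrades optimality-in-a-larger-restriction to optimality-in-$C$. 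Having established $C \subseteq O_\phi(C)$, coinduction immediately gives $C \subseteq \nu X. O_\phi(X)$, which is exactly the set of strategies surviving iterated elimination, and we are done.

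Formally, I expect this to be organised as: (1) recall or state the greatest-fixpoint characterisations of common belief and of iterated elimination (these are flagged as available via Note~\ref{note:expr1} and the discussion of $O_\phi$); (2) express "common true belief of $\phi$-rationality" as a formula $\psi$ of $\mathcal{L}_\nu$ and show semantically (or derive syntactically, since the theorem is billed as a "syntactic proof") that $\psi$ entails its own one-step $O_\phi$-unfolding; (3) invoke the fixpoint induction/coinduction rule of the proof system for $\mathcal{L}_\nu$ to conclude that $\psi$ entails the greatest-fixpoint formula, and read off the game-theoretic statement. Because the paper emphasises a \emph{proof-theoretic} treatment, step (2) will be carried out as a derivation using the modal axioms for belief (in particular the truth axiom $\mathbf{T}$ for the "true belief" part and the distribution/monotonicity of the belief modality) together with the positivity of $\phi$.

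The main obstacle will be step (2): cleanly matching the syntactic shape of "common true belief of rationality" against the recursive shape required to apply the coinduction rule for $O_\phi$. Two subtleties need care. First, the belief modality must be handled so that "each player believes the others are rational" can be pushed through to "each player's believed-optimal strategy really is optimal relative to the common-belief restriction" — this is exactly the point where correctness of belief (the $\mathbf{T}$-style axiom, justified by "common \emph{true} belief") is indispensable, and it is the only place the truth assumption enters. Second, one must verify that the optimality operator genuinely commutes appropriately with intersection/restriction so that monotonicity suffices; this is precisely why the hypothesis that $\phi$ is monotonic (equivalently, its defining $\mathcal{L}_O$-formula is positive) cannot be dropped, and I would isolate this as a small lemma about $O_\phi$ on the lattice of restrictions before assembling the final argument.
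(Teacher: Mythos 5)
Your proposal is essentially the paper's proof: the paper formalises common true belief of $\phi$-rationality as $\chi := \rat_{\phi} \land \square^* \rat_{\phi}$, derives the one-step closure $\chi \implies \opt_{\phi}\chi$ from the axioms $\rat Dis$ and $\nu Dis$ alone (no $\mathbf{T}$ or $K$ axiom is needed --- ``truth'' enters only as the outer conjunct $\rat_{\phi}$, not as a frame condition), and then concludes by the coinduction rule $\nu Ind$, which is exactly your steps (2) and (3). One small correction to your gloss of the closure step: monotonicity is used to lift optimality from the \emph{smaller} believed restriction $\Hgame_{P_i(\omega)}$ up to the \emph{larger} common-belief restriction (since unfolding $\square^*\rat_\phi$ gives $P_i(\omega) \subseteq \interp{\chi}$), not to push optimality in a larger restriction down to $C$ --- the latter direction is what monotonicity does \emph{not} give you.
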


Although this theorem relies on a rule for fixpoint calculi that is
only sound for monotonic operators, the semantics of the language
$\mathcal{L}_\nu$ allows also for arbitrary \df{contracting}
operators, i.e.~such that for all $A$, $O(A) \subseteq A$.  We are
therefore able to look at what more is needed in order to justify the
following statement (cf.~\cite[Proposition 3.10]{BB_Survey}), where
$gbr$-rationality means avoiding avoiding strategies one believes to be
never best responses in the global sense:

\begin{theorem*}
\textup{({\bfseries Imp})}
  Common true belief of $gbr$-rationality implies that the players will choose
  only strategies that survive the iterated elimination of strictly
  dominated strategies.
\end{theorem*}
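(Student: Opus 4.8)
The plan is to derive \textup{({\bfseries Imp})} from the first theorem together with a single first-order comparison of optimality conditions, so that no new fixpoint principle is needed. First I would instantiate the first theorem at $\phi := gbr$. This is legitimate because the global best-response operator $O_{gbr}$ is monotonic: its defining formula in $\mathcal{L}_O$ is positive, exactly as noted above for the global versions of best response and strict dominance. The first theorem then yields: common true belief of $gbr$-rationality entails that every played strategy survives the iterated elimination of non-$gbr$-optimal strategies. Write $G_{gbr}$ for the restriction reached by that elimination, i.e.\ the greatest fixpoint of $O_{gbr}$ below the initial game; equivalently, in $\mathcal{L}_\nu$, a greatest-fixpoint formula $\nu X.\,\psi_{gbr}(X)$ holds of every played strategy.

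The remaining work --- ``what more is needed'' --- is to connect $gbr$-elimination to the elimination of strictly dominated strategies. For this I would prove in $\mathcal{L}_O$ the one-step implication: if $s_i$ is a best response within a restriction $R$ then $s_i$ is not strictly dominated within $R$; equivalently $O_{gbr}(R) \subseteq O_{sd}(R)$ for every restriction $R$. This is the classical easy direction --- a strategy that maximises expected payoff against some belief over $R_{-i}$ cannot be beaten pointwise on $R_{-i}$ by an alternative, since such an alternative would do strictly better against that belief. In $\mathcal{L}_\nu$ this is the premise $\vdash \psi_{gbr}(X) \to \psi_{sd}(X)$, from which the derivable rule of monotonicity of $\nu$ in its body gives $\vdash \nu X.\,\psi_{gbr}(X) \to \nu X.\,\psi_{sd}(X)$, i.e.\ $G_{gbr} \subseteq G_{sd}$, the outcome of the iterated elimination of strictly dominated strategies. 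Chaining the two steps: any strategy played under common true belief of $gbr$-rationality lies in $G_{gbr} \subseteq G_{sd}$, which is \textup{({\bfseries Imp})}.

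The step I expect to be the main obstacle is keeping the global/local distinction coherent across that bridge. The rule of monotonicity of $\nu$, and the identification of the elimination outcome with a greatest fixpoint, are available for the global strict-dominance operator but not for the textbook local one, which is contracting yet not monotonic and whose transfinite iteration can even be order-dependent on infinite games. I would therefore phrase the conclusion with the global operator $O_{sd}$ --- for which $G_{sd}$ is a genuine greatest fixpoint that the semantics of $\mathcal{L}_\nu$ handles directly --- and invoke the coincidence of the local and global iterations on finite games (from \cite{Apt07c}) to recover the usual formulation. A secondary, routine point is to check that ``best response'' in the $gbr$ sense quantifies over the same class of beliefs implicit in the statement of \textup{({\bfseries Imp})}, so that the one-step inclusion $O_{gbr}(R) \subseteq O_{sd}(R)$ holds verbatim rather than only after a translation.
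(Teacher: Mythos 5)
Your first half matches the paper exactly: instantiate the first theorem at $\phi:=gbr$ (legitimate since $gbr_i$ is positive, hence $O_{gbr}$ monotonic), and then bridge to strict dominance via a pointwise implication between optimality conditions. The pointwise step is even easier than your expected-payoff argument suggests: in $\mathcal{L}_O$ it is just $\exists z\in C\,\forall y\; o\geq^i_z y \implies \forall y\in C\,\exists z\in C\; o\geq^i_z y$, a one-line first-order derivation (the rule $\exists z\forall y/\forall y\exists z$ plus weakening of the quantifier domain), which the paper packages as the premise of its \emph{Link} rule.

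The genuine gap is in how you cross from the monotonic fixpoint to the elimination of strictly dominated strategies. The theorem as the paper formalizes it, formula (\ref{eqn:imp}), has the \emph{local} condition $lsd$ in the conclusion, i.e.\ $\nu x.\opt_{lsd}x$, and it is asserted for arbitrary games. You correctly identify that ``monotonicity of $\nu$ in its body'' is only available when both bodies are positive, but your resolution --- state the conclusion with the global operator and recover the local one from the finite-game coincidence of \cite{Apt07c} --- proves a weaker statement: it gives (\textbf{Imp}) with local elimination only for finite games, whereas the paper's whole point (and its motivation via \cite{Lipman_CKRat}) is to cover infinite games and transfinite iterations. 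The missing ingredient is the rule \emph{Incl}: from $\chi\implies\psi$ with $\chi$ positive but $\psi$ an \emph{arbitrary} $\nu$-free formula, infer $\nu X.\chi \implies \nu X.\psi$, where $\nu X.\psi$ is interpreted as the transfinite iteration of the contracting operator $\interpb{\psi\wedge X}{\cdot}$. Its soundness is exactly Lemma \ref{lem:inc}: if $O_1$ is monotonic and $O_1(S)\subseteq O_2(S)$ for all $S$, then $O_1^{\infty}\subseteq\overline{O_2}^{\infty}$, proved by transfinite induction using monotonicity of $O_1$ only. With that lemma the chain $\nu X.\opt_{gbr}X \implies \nu X.\opt_{lsd}X$ goes through directly for arbitrary games, with no detour through $gsd$ and no finiteness assumption. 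Without it (or an equivalent), your argument does not establish the theorem in the generality claimed.
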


This theorem connects a global notion of $gbr$-rationality with a local one,
referred to in the iterated elimination operator.
Our language allows for arbitrary contracting operators, and their fixpoints to be formed,
and we exhibit one sound rule connecting the resulting fixpoints with monotonic fixpoints.

Our theorems hold for arbitrary games, and the resulting potentially transfinite iterations of the elimination process.
The syntactic approach clarifies the logical underpinnings of the epistemic analysis.
It shows that the use of transfinite iterations can be naturally captured
in $\mathcal{L}_\nu$, at least when the relevant operators are monotonic,
by a single inference rule that involves greatest fixpoints.

The relevance of monotonicity in the context of epistemic analysis of 
finite strategic games has already been pointed out in \cite{vB_RatDyn},
where the connection is also noted between the iterated elimination of non-optimal strategies
and the calculation of the fixpoint of the corresponding operator.

To our knowledge, although several languages have been suggested for reasoning about strategic games
(e.g.~\cite{dB_PHD}),
none use explicit fixpoints (except, as we mentioned, for some suggestions in \cite{vB_RatDyn}) and none use arbitrary optimality operators.

Therefore they are not appropriate for reasoning at the level of abstraction that we suggest 
when studying the epistemic foundations of these ``effective'' solution concepts.
For example while \cite[Section 13]{dB_PHD} does provide some analysis of the logical form of the argument that common knowledge of one kind of rationality implies not playing strategies that are strictly dominated,
the fixpoint reasoning is done at the meta-level.
What \cite{dB_PHD} provides is a proof schema, that shows how, for any finite game,
and any natural number $n$,
to give a proof that common knowledge of rationality entails not playing strategies that are eliminated in $n$ rounds of elimination of non-optimal strategies.

The more general and elegant reasoning principle is captured by using
fixpoint operators and optimality operators.  Another important
advantage to our approach is that we are not restricted in our
analysis to finite games.  This means in particular that our logical
analysis covers the mixed extension of any finite game.

Our use of transfinite iterations is motivated by the original finding of
\cite{Lipman_CKRat}, where a two-player game is constructed for which the
$\omega_0$ (the first infinite ordinal) and $\omega_{0} + 1$ iterations of
the rationalizability operator of \cite{Bernheim} differ.

\section{Games and the language $\mathcal{L}_O$}

A \df{strategic game} is a tuple $(T_1,\ldots,T_n,<_1,\ldots,<_n)$, where
$\{1, \ldots, n\}$ are the players and each $T_i$ is player $i$'s set of strategies, and $<_i$ is player $i$'s preference relation,
which is a total linear order over the set of \df{strategy profiles} $T = \prod_{i=1}^n T_i$.
Note that we assume arbitrary games, rather than restricting to games in which $T$ is finite.
To depict games it is sometimes easier,
as we did in Figure \ref{fig:twogames},
to write down a number for the players' ``payoffs'', rather than just a preference ordering.
We use some standard notation from game theory,
writing $s_{-i}$ for $(s_1, \ldots s_{i - 1}, s_{i+1}, \ldots s_n)$
and $(s_i, t_{-i})$ for the strategy profile
$(t_1, \ldots t_{i - 1}, s_i, t_{i+1}, \ldots s_n)$,
as well as $S_{-i}$ for $\prod_{j \neq i} S_j$.
A \df{restriction} of the game $(T_1,\ldots,T_n,<_1,\ldots,<_n)$ is a sequence $S = (S_1,\ldots,S_n)$ with $S_i \subseteq T_i$ for all players $i$, i.e.~a (possibly empty) subgame in which the payoff information is left out.

The language we use for specifying optimality conditions is a first-order language,
with variables $V = \{x, y, z, \ldots\}$,
a monadic predicate $C$, a constant $o$ and
a family of $n$ ternary relation
symbols $\cdot \geq_{\cdot}^i \cdot$, where $i \in [1..n]$.
So $\mathcal{L}_O$ is given by the following inductive definition:
\[
 \phi ::= C(a) \mid a \geq_c^i b \mid \neg \phi \mid \phi \land \phi \mid \exists x \phi,
\]
where $i \in [1..n]$ and $\{a,b,c\} \subseteq V \cup \{o\}$.

We use the standard abbreviations $\implies$ and $\lor$,
further abbreviate
$\neg \: a \geq_c^i b$ to $b >_c^i a$,
$\forall x \phi$ to $\neg \exists x \neg \phi$,
$\exists x (C(x) \land \phi)$ to $\exists x \in C \: \phi$,
and $\forall x (C(x) \implies \phi)$ to $\forall x \in C \: \phi$.

An \emph{optimality model} $(\Hgame, \Hgame', s)$ is a triple consisting 
of a strategic game $\Hgame = (T_1, \ldots, T_n, <_1, \ldots, <_n)$,
a restriction $\Hgame'$ of $\Hgame$, and a strategy profile $s \in T$.
$G$ will be used to interpret the predicate $C$, and $s$ will be the interpretation of $o$.
An \emph{assignment} for $(\Hgame,\Hgame',s)$ is a function $\alpha$
assigning a strategy profile in $T$ to each variable, and $s$ to $o$.
The ternary satisfaction relation $\models$
between optimality models, assignments and formulas of $\mathcal{L}_O$ is defined inductively as follows,
where $\alpha$ is an assignment for $(\Hgame,\Hgame',s)$, and $\not \models$ the complement of $\models$:
\[\begin{array}{lll}
(\Hgame,\Hgame',s) \models_\alpha C(x)                      & \Leftrightarrow & \forall i \in \{1, \ldots, n\}, \, (\alpha(x))_i \in \Hgame'_i \\
(\Hgame,\Hgame',s) \models_\alpha x \geq_z^i y              & \Leftrightarrow & (\alpha(x)_i,\alpha(z)_{-i}) \geq_i (\alpha(y)_i, \alpha(z)_{-i}) \\
(\Hgame,\Hgame',s) \models_\alpha \neg \phi         & \Leftrightarrow & (\Hgame,\Hgame',s) \not \models_\alpha \phi \\
(\Hgame,\Hgame',s) \models_\alpha \phi_1 \land \phi_2       & \Leftrightarrow & (\Hgame,\Hgame',s) \models_{\alpha} \phi_1 \textup{ and } (\Hgame,\Hgame',s) \models_\alpha \phi_2 \\
(\Hgame,\Hgame',s) \models_\alpha \exists x \phi            & \Leftrightarrow & \textup{there is }\alpha':\,(\Hgame,\Hgame',s) \models_{\alpha'} \phi \textup{ and }\\
                                                                &                         & \forall y \in V \textup{ with } x \neq y, \; \alpha(y) = \alpha'(y)
\end{array}\]
If for any assignment $\alpha$ for $\Hgame$ we have $(\Hgame,\Hgame',s) \models_\alpha \phi$ then we write $(\Hgame,\Hgame',s) \models \phi$.
A variable $x$ occurs \df{free} in $\phi$ if it is not under the scope of a quantifier $\exists x$;
a formula is \df{closed} if it has no free variables.

An \df{optimality condition} for player $i$ is a
closed $\mathcal{L}_O$-formula in which all the occurrences of the atomic formulas $a \geq_c^j b$
are with $j$ equal to $i$.
Intuitively, an optimality condition $\phi_i$ for player $i$ is a way of specifying what it means for
$i$'s strategy in $o$ to be an `OK' choice for $i$ given that $i$'s
opponents will play according to $C_{-i}$ and that $i$'s alternatives are $C_i$.

In particular, we are interested in the following optimality conditions:

\begin{itemize}
 \item $lsd_{i} := \forall y \in C \, \exists z \in C \, o \geq_z^i y$,
 \item $gsd_{i} := \forall y \, \exists z \in C \, o \geq_z^i y$,
 \item $gbr_{i} := \exists z \in C \, \forall y \: o \geq_z^i y$.
\end{itemize}

The optimality conditions listed define some fundamental notions from
game theory: $lsd_i$ says that $o_i$ is not \emph{locally} strictly
dominated in the context of $C$; $gsd_i$ says that $o_i$ is not
\emph{globally} strictly dominated in the context of $C$; and $gbr_i$
says that $o_i$ is globally a best response in the context of $C$.

The distinction between local and global properties, studied further in
\cite{Apt_ManyFaces}, is clarified below.  It important for us here
because the global versions, in contrast to the local ones, satisfy a
syntactic property to be defined shortly.

First, as an illustration of the difference between $gbr_i$ and $gsd_i$, consider the game in Figure \ref{fig:brsd}.
\begin{figure}
\begin{center}
\begin{game}{3}{2}
        & $L$   & $R$ \\
$U$     & $2,1$ & $0,0$ \\
$M$     & $0,1$ & $2,0$ \\
$D$     & $1,0$ & $1,2$
\end{game}
\end{center}
\caption{\label{fig:brsd}An illustration of the difference between strict dominance and best response}
\end{figure}
Call that game $H$, with the row player $1$ and the column player $2$.
Then we have
\[(H,(T_1,T_2),(D,R)) \models gsd_1,\]
but
\[(H,(T_1,T_2),(D,R)) \models \neg gbr_1.\]
The local notions are such that when the `context' restriction $C$ consists of a singleton strategy for a player $i$,
then that strategy is locally optimal.  So for example
\[(H,(\{U,M\},\{R\}),(U,R)) \models lsd_2,\]
whereas
\[(H,(\{U,M\},\{R\}),(U,R)) \models \neg gsd_2.\]

We say that an optimality condition $\phi_i$ is \df{positive} when any sub-formula of the form $C(z)$,
with $z$ any variable, occurs under the scope of an even number of negation signs ($\neg$).
Note that both $gbr_i$ and $gsd_i$ are positive, while $lsd_i$ is not.
As we will see in a moment, positive optimality conditions induce monotonic optimality operators,
and monotonicity will be the condition required of optimality operators in Theorem \ref{thm:main}
relating common knowledge of $\phi$-rationality with the iterated elimination of non-$\phi$ strategies.

\section{Optimality operators}
\label{sec:prelim}

Henceforth let $\Hgame = (T_1,\ldots,T_n,<_1,\ldots,<_n)$ be a fixed
strategic game.  Recall that a \emph{restriction} of the game $\Hgame$
is a sequence $S = (S_1,\ldots,S_n)$ with $S_i \subseteq T_i$ for all
players $i$.  We will interpret optimality conditions as
\emph{operators} on the lattice of the restrictions of a
game ordered by component-wise set inclusion:
\[
\mbox{$(S_1, \ldots, S_n) \subseteq (S'_1, \ldots, S'_n)$ iff $S_i \subseteq S'_i$ for all $i \in [1..n]$}.
\]

Given a sequence $\phi$ giving an optimality condition $\phi_i$ for each player $i$, we introduce an \df{optimality operator} $O_\phi$ defined by
\[
O_\phi(S) = \prod_{i=1}^n\{s_i \in S_i \, \mid \, \phi_i(s_i,S). \}
\]

Consider now an operator $O$ on an arbitrary complete lattice $(D, \subseteq)$ with largest element $\top$.
We say that an element $S \in D$ is a \df{fixpoint} of $O$ if $S = O(S)$
and a \df{post-fixpoint} of $O$ if $S \subseteq O(S)$.

We define by 
transfinite induction a sequence of elements $O^{\alpha}$ of $D$, for all ordinals $\alpha$:
\begin{itemize}
  \item $O^{0} := \top$,
  \item $O^{\alpha+1} := O(O^{\alpha})$,
  \item for limit ordinals $\beta$, $O^{\beta} := \bigcap_{\alpha < \beta} O^{\alpha}$.
\end{itemize}
We call the least $\alpha$ such that $O^{\alpha+1} = O^{\alpha}$ the \df{closure ordinal} of $O$
and denote it by $\alpha_O$.  We call then $O^{\alpha_O}$ the \df{outcome of} (iterating) $O$ and write it alternatively as $O^{\infty}$.


Not all operators have fixpoints, but the monotonic and contracting ones (already defined in the introduction) do:
\begin{note} \label{note:con}
Consider an operator $O$ on $(D, \subseteq)$.
\begin{enumerate}\smallromani
\item If $O$ is contracting or monotonic, then it has an outcome, i.e., 
$O^{\infty}$ is well-defined.
\item The operator $\overline{O}$ defined by $\overline{O}(X) := O(X) \cap X$ is contracting.
\item If $O$ is monotonic, then the outcomes of $O$ and $\overline{O}$ coincide.
\end{enumerate}
\end{note}
\begin{proof}
For (i), it is enough to know that for every set $D$ there is an ordinal $\alpha$ such that there is no injective function from $\alpha$ to $D$.
\end{proof}

Note that the operators $O_\phi$ are by definition contracting, and hence all have outcomes.
Furthermore, it is straightforward to verify that if $\phi_i$ is positive for all players $i$, then $O_\phi$ is monotonic.

The following classic result due to \cite{Tarski_Fixpoint}
also forms the basis of the soundness of some part of the proof systems we consider.\footnote{We
use here its `dual' version in which the iterations start at the largest and not at the least element of a complete lattice.}

\medskip
\noindent
\textbf{Tarski's Fixpoint Theorem} 
For every monotonic operator $O$ on $(D, \subseteq)$
\[
O^{\infty} = \nu O = \cup \{S \in D \mid S \subseteq O(S)\},
\]
where $\nu O$ is the largest fixpoint of $O$.
\vspace{2mm}

We shall need the following lemma,
which is crucial in connecting iterations of arbitrary contracting operators with those of monotonic operators.
It also ensures the soundness of one of the proof rules we will introduce.

\begin{lemma} \label{lem:inc}
Consider two operators $O_1$ and $O_2$ on $(D, \subseteq)$ such that
\begin{itemize}
\item for all $S \in D$, $O_1(S) \subseteq O_2(S)$,

\item $O_1$ is monotonic. 

\end{itemize}
Then $O_1^{\infty} \subseteq \overline{O_2}^{\infty}$.
\end{lemma}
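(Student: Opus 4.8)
The plan is to prove, by transfinite induction on $\alpha$, the statement $O_1^{\infty} \subseteq \overline{O_2}^{\,\alpha}$ for every ordinal $\alpha$; instantiating $\alpha$ with the closure ordinal $\alpha_{\overline{O_2}}$ then gives $O_1^{\infty} \subseteq \overline{O_2}^{\,\infty}$, which is the claim. Note first that both outcomes are well-defined: $O_1^{\infty}$ exists by Note \ref{note:con}(i) since $O_1$ is monotonic, and $\overline{O_2}^{\,\infty}$ exists because $\overline{O_2}$ is contracting by Note \ref{note:con}(ii), so Note \ref{note:con}(i) applies to it as well. Moreover, by Tarski's Fixpoint Theorem $O_1^{\infty} = \nu O_1$, so in particular $O_1(O_1^{\infty}) = O_1^{\infty}$; this is the fact that the argument leans on.

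For the base case $\alpha = 0$ we have $\overline{O_2}^{\,0} = \top$, and $O_1^{\infty} \subseteq \top$ trivially. The limit case is immediate: if $\beta$ is a limit ordinal and $O_1^{\infty} \subseteq \overline{O_2}^{\,\gamma}$ for all $\gamma < \beta$, then $O_1^{\infty} \subseteq \bigcap_{\gamma < \beta} \overline{O_2}^{\,\gamma} = \overline{O_2}^{\,\beta}$. The interesting step is the successor case. Assume $O_1^{\infty} \subseteq \overline{O_2}^{\,\alpha}$. We must show $O_1^{\infty} \subseteq \overline{O_2}^{\,\alpha+1} = O_2(\overline{O_2}^{\,\alpha}) \cap \overline{O_2}^{\,\alpha}$. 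Inclusion in the second conjunct is exactly the induction hypothesis. For the first conjunct, apply monotonicity of $O_1$ to the induction hypothesis to obtain $O_1(O_1^{\infty}) \subseteq O_1(\overline{O_2}^{\,\alpha})$, and then use the domination hypothesis $O_1(S) \subseteq O_2(S)$ with $S = \overline{O_2}^{\,\alpha}$ to get $O_1(\overline{O_2}^{\,\alpha}) \subseteq O_2(\overline{O_2}^{\,\alpha})$. Since $O_1(O_1^{\infty}) = O_1^{\infty}$, chaining these inclusions yields $O_1^{\infty} \subseteq O_2(\overline{O_2}^{\,\alpha})$, completing the successor step.

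The point where care is needed is precisely this successor step, and the reason it works is the combination of two facts: that $O_1^{\infty}$ is a genuine fixpoint of $O_1$ — which is where monotonicity of $O_1$ (via Tarski) is essential, since a mere post-fixpoint could not be ``re-hit'' by $O_1$ and left in place — and that $\overline{O_2}$ is contracting, so its $\alpha$-th stage is still large enough to contain $O_1^{\infty}$ as the hypothesis propagates. No control over closure ordinals is required; the induction simply runs until $\overline{O_2}$ stabilises, at which point the desired inclusion has been established.
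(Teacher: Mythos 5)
Your proof is correct, and it takes a mildly different route from the paper's. The paper runs a parallel transfinite induction establishing $\overline{O_1}^{\alpha} \subseteq \overline{O_2}^{\alpha}$ for every $\alpha$ and then identifies $O_1^{\infty}$ with $\overline{O_1}^{\infty}$ via Note~\ref{note:con}(iii); you instead fix the left-hand side at the outcome $O_1^{\infty}$ from the outset, using Tarski's Fixpoint Theorem to get $O_1(O_1^{\infty}) = O_1^{\infty}$, and prove $O_1^{\infty} \subseteq \overline{O_2}^{\alpha}$ by induction on $\alpha$. The engine of the successor step is the same in both arguments (monotonicity of $O_1$ applied to the induction hypothesis, followed by the pointwise domination $O_1(S) \subseteq O_2(S)$), so the difference is only in the bookkeeping: you need Tarski where the paper gets by with the elementary Note~\ref{note:con}(iii), and in exchange you avoid tracking the $\overline{O_1}$ iteration. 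One small quibble with your closing commentary: the full fixpoint equality is not in fact essential, since the post-fixpoint inclusion $O_1^{\infty} \subseteq O_1(O_1^{\infty})$ already yields the chain $O_1^{\infty} \subseteq O_1(O_1^{\infty}) \subseteq O_1(\overline{O_2}^{\alpha}) \subseteq O_2(\overline{O_2}^{\alpha})$; this does not affect the correctness of your proof, only the accuracy of the remark about what is ``essential''.
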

\begin{proof}
By Note \ref{note:con}$(i)$ the outcomes of
$O_1$ and $\overline{O_2}$ exist.

We prove now by transfinite induction that for all $\alpha$
\[
\overline{O_1}^{\alpha} \subseteq \overline{O_2}^{\alpha}
\]
from which the claim follows, since by Note \ref{note:con}$(iii)$
we have $O_1^{\infty} = \overline{O_2}^{\infty}$.

By the definition of the iterations we only need to consider the induction
step for a successor ordinal.  So suppose the claim holds for some
$\alpha$. 

The second assumption implies that $\overline{O_1}$ is monotonic.
We have the following string of inclusions and equalities,
where the first inclusion holds by the induction hypothesis and monotonicity of $\overline{O_1}$
and the second one by the first assumption
\[
\overline{O_1}^{\alpha + 1} =   
\overline{O_1}(\overline{O_1}^{\alpha}) \subseteq 
\overline{O_1}(\overline{O_2}^{\alpha}) =
O_1(\overline{O_2}^{\alpha}) \cap \overline{O_2}^{\alpha} \subseteq 
O_2(\overline{O_2}^{\alpha}) \cap \overline{O_2}^{\alpha} = 
\overline{O_2}^{\alpha + 1}.
\]
\end{proof}

\section{Beliefs and the modal fixpoint language $\mathcal{L}_\nu$}

Recall that $\Hgame$ is a game $(T_1, \ldots, T_n, P_1, \ldots, P_n)$.
A \df{belief model} for $\Hgame$ is a tuple $(\Omega, \st_1, \ldots, \st_n, P_1, \ldots, P_n)$,
with $\Omega$ a non-empty set of `states', and for each player $i$, $\st_i: \Omega \rightarrow T_i$ and $P_i: \Omega \rightarrow 2^\Omega$.
The $P_i$'s are \emph{possibility correspondences} cf.~\cite{BB_Survey}.
The idea of a possibility correspondence $P_i$ is that if the actual state is $\omega$ then $P_i(\omega)$ is the set of states that $i$ considers possible: those that $i$ considers might be the actual state.

Subsets of $\Omega$ are called \df{events}.
A player $i$ \emph{believes} an event $E$ if that event holds in every state that $i$ considers possible.
Thus at the state $\omega$, player $i$ believes $E$ iff $P_i(\omega) \subseteq E$.

Given some event $E$ we write $\Hgame_E$ to denote the restriction of $\Hgame$ determined by $E$:
\[
(\Hgame_E)_i = \{s_i \in T_i \mid \exists u \in E: \st_i(u) = s_i\}.
\]

In the rest of this section we present a formal language $\cal{L}_\nu$ that will be interpreted over belief models.
To begin, we consider the simpler language $\cal{L}$, the formulas of which are
defined inductively as follows, where $i \in [1..n]$:

\[
 \psi ::= \rat_{\phi_i} \mid \psi \land \psi \mid \neg \psi \mid \square_i \psi \mid \opt_{\phi_i} \psi,
\]
with $\phi_i$ an optimality condition for player $i$.
We abbreviate the formula $\bigwedge_{i \in [1..n]} \rat_{\phi_i}$ to $\rat_{\phi}$, 
$\bigwedge_{i \in [1..n]} \square_i \psi$ to $\square \psi$ and
$\bigwedge_{i \in [1..n]} \opt_{\phi_i} \psi$ to $\opt_\phi \psi$.

Formulas of $\cal{L}$ are interpreted as events in (i.e.~as subsets of the domain of) belief models.
Given a belief model $(\Omega, \st_1, \ldots, \st_n, P_1, \ldots, P_n)$ for $\Hgame$,
we define the \df{interpretation function} $\interp{\cdot} : {\cal L} \rightarrow {\cal P}(\Omega)$ as follows:
\begin{itemize}
 \item $\interp{rat_{\phi_i}} = \{ \omega \in \Omega \mid \phi_i(\st_i(\omega), \Hgame_{P_i(\omega)})\}$,
 \item $\interp{\phi \land \psi} = \interp{\phi} \cap \interp{\psi}$,
 \item $\interp{\neg \psi} = \Omega - \interp{\psi}$,
 \item $\interp{\square_i \psi} = \{ \omega \in \Omega \mid P_i(\omega) \subseteq \interp{\psi} \}$,
 \item $\interp{\opt_{\phi_i} \psi} = \{ \omega \in \Omega \mid (\Hgame, \Hgame_{\interp{\psi}}, \st_i(\omega)) \models \phi_i \}$.
\end{itemize}

$P_i(\omega)$ gives the set of states that $i$ considers possible at
$\omega$, so $\interp{rat_{\phi_i}}$ is the event that player $i$ is
$\phi_i$-rational, since it means that $i$'s strategy is optimal
according to $\phi_i$ in the context that the player considers it
possible that he is in.  The semantic clause for $\square_i$ was
mentioned at the begin of this section and is familiar from epistemic
logic: $\interp{\square_i\psi}$ is the event that player $i$ believes
the event $\interp{\psi}$.  $\interp{\opt_{\phi_i} \psi}$ is the event
that player $i$'s strategy is optimal according to the optimality
condition $\phi_i$, in the context of the restriction
$\Hgame_{\interp{\psi}}$.

Then clearly $\interp{rat_\phi}$ is the event that every player $i$ is $\phi_i$-rational;
$\interp{\opt_\phi{\psi}}$ is the event that every player's strategy is $\phi_i$-optimal
in the context of the restriction $\Hgame_{\interp{\psi}}$;
and $\interp{\square \psi}$ is the event that every player believes the event \interp{\psi} to hold.

Although $\cal{L}$ can express some connections between
our formal definitions of optimality rationality and beliefs,
it could be made more expressive.
The language could be extended
with, for example, atoms $s_i$ expressing the event that the strategy $s_i$ is chosen.
This choice is made for example in \cite{dB_PHD}, where modal languages for reasoning about games are defined.
The language we introduce is not parametrised by the game, and consequently can unproblematically be used to reason about games with arbitrary strategy sets.

We will use our language to talk about fixpoint notions: common belief and iterated elimination of non-optimal strategies.
Let us therefore explain what is meant by \df{common belief}.
Common belief of an event $E$ is the event that all players believe $E$, all players believe that they believe $E$, all players believe that they believe that\ldots,  and so on.
Formally, we define $\mathcal{CB}(E)$, the event that $E$ is commonly believed, inductively:
\begin{eqnarray*}
\mathcal{B}_1(E)                & = & \{ \omega \in \Omega \mid \forall i \in [1..n], \, P_i (\omega) \subseteq E \} \\
\mathcal{B}_{m+1}(E)    & = & \mathcal{B}_1(\mathcal{B}_m(E)) \\
\mathcal{CB}(E)         & = & \bigcap_{m > 0}\mathcal{B}_m(E)
\end{eqnarray*}
Notice that $\mathcal{B}_1(E)$ is the event that everybody believes
that $E$ (indeed, we have $\mathcal{B}_1 \interp{\psi} =
\interp{\square \psi}$),  $\mathcal{B}_2(E)$ is the event that
everybody believes that everybody believes that $E$, etc.

`Common belief' is called `common knowledge' when for all players $i$
and all states $\omega \in \Omega$, we have $\omega \in P_i(\omega)$.
In such a case the players have never ruled out the current
state, and so it is legitimate to interpret $\square_i
\psi$ as `$i$ knows that $\psi$'.

Both common knowledge and common belief are known to have equivalent characterisations as fixpoints, and we will exploit this below in defining them in the modal fixpoint language which we now specify.

We extend the vocabulary of ${\cal L}$ with a single
set variable denoted by $X$ and the
contracting
fixpoint operator $\nu X$. 
(The corresponding extension of first-order logic by the dual, inflationary fixpoint operator $\mu X$ was first studied in \cite{DGK_Inflate}.)
Modulo one caveat the resulting language ${\cal L}_\nu$ is defined as follows:
\[
 \psi ::= \rat_{\phi_i} \mid (\psi \land \psi) \mid \neg \psi \mid \square_i \psi \mid \opt_{\phi_i} \psi \mid \nu X. \psi
\]

The caveat is the following: 
\begin{itemize}
\item $\phi$ must be \df{$\nu$-free}, which means that
it does not contain any occurrences of the $\nu X$ operator.
\end{itemize}
This restriction is not necessary but simplifies matters and is sufficient for our considerations.

To extend the interpretation function $\interp{\cdot}$ to ${\cal L}_\nu$,
we must keep track of the variable $X$.
Therefore we first extend the function $\interp{\cdot} : {\cal L} \rightarrow {\cal P}(\Omega)$
to a function $\interpb{\cdot}{\cdot}: {\cal L}_\nu \times {\cal P}(\Omega) \rightarrow {\cal P}(\Omega)$
by padding it with a dummy argument.
We give one clause as an example:
\begin{itemize}
 \item $\interpb{\square_i \psi}{E} = \{ \omega \in \Omega \mid P_i(\omega) \subseteq \interpb{\psi}{E} \}.$
\end{itemize}
We use this extra argument in the semantic clause for the variable $X$:
\begin{itemize}
 \item $ \interpb{X}{E} = E$.
\end{itemize}
Those formulas whose semantics we have so far given define operators.
More specifically, for each of them $\interpb{\psi}{\cdot}$ is an operator on the powerset ${\cal P}(\Omega)$ of $\Omega$.
We use this to define the clause for $\nu X$:
\begin{itemize}
 \item $\interpb{\nu X. \psi}{E} = (\interpb{\psi \wedge X}{\cdot})^{\infty}$.
\end{itemize}
When $X$ does not occur free in $\psi$, we have $\interpb{\psi}{E} = \interpb{\psi}{F}$ for any events $E$ and $F$,
so in these cases we can write simply $\interp{\psi}$.
Note that $\interp{\nu X. \psi}$ is well-defined since for all $E$ we have
$\interpb{\psi \wedge X}{E} = \interpb{\psi}{E} \cap \interpb{X}{E} \subseteq E$, so
the operator $\interpb{\psi \wedge X}{\cdot}$ is contracting.

We say that a formula $\psi$ of $\mathcal{L}_\nu$ is
\df{positive in} $X$ when each occurrence of $X$ in $\psi$ is under the scope of an even number of negation signs ($\neg$),
and under the scope of an optimality operator $\opt_{\phi_i}$ only if $\phi_i$ is positive.
\begin{note}
When $\psi$ is positive, the operator
$\interpb{\psi}{\cdot}$ is monotonic.
\end{note}
Then by Tarski's Fixpoint Theorem and Note \ref{note:con}$(iii)$
we can use the following alternative definition of $\interp{\nu X. \psi}$ in terms of post-fixpoints:
\[
\interp{\nu X. \psi} = \bigcup \{ E \subseteq \Omega \mid E \subseteq \interpb{\psi}{E} \}.
\]

Let us mention some properties the language ${\cal L}_\nu$ can express.
First notice that common belief is definable in ${\cal L}_\nu$
using the $\nu X$ operator.
An analogous characterization of common knowledge 
is in \cite[ Section 11.5]{FHMV_RAK}.

\begin{note}
\label{note:expr1}
Let $\psi$ be a formula of $\cal{L}$.
Then $\interp{\nu X.  \square (X \land \psi)}$ is the event that the event $\interp{\psi}$ is common belief.
\end{note}




From now on we abbreviate the formula $\nu X. \square(X \land \psi)$
with $\psi$ a formula of ${\cal L}$ to $\square^*\psi$.
So $\mathcal{L}_\nu$ can define common belief.
Moreover, as the following observation
shows, it can also define the iterated elimination of non-optimal strategies.
\begin{note} \label{note:expr2} 
In the game determined by the event $\interp{\nu X. \opt_{\phi} X}$, every player selects a strategy which survives the iterated elimination of non-$\phi$-optimal strategies.
\end{note}
\begin{proof}
It follows immediately from the following equivalence, which is obtained by unpacking the relevant definitions:
\[
\Hgame_{\interpb{\opt_\phi X \land X}{E}} = O_\phi(\Hgame_E).
\]
\end{proof}

\section{Proof Systems}

Consider the following formula:
\begin{equation}
\label{eqn:syntax2}
 (\rat_{\phi} \land \square^* \rat_{\phi}) \implies \nu X. \opt_{\phi} X.
\end{equation}

By Notes \ref{note:expr1} and \ref{note:expr2}, we see that (\ref{eqn:syntax2}) states that:
true common belief that the players are $\phi$-rational entails that
each player selects a strategy that survives the iterated elimination of non-$\phi$-optimal strategies.

In the rest of this section we will discuss a simple proof system in which
we can derive (\ref{eqn:syntax2}).
We will use an axiom and rule of inference for the fixpoint operator taken from \cite{K_Mu_Ax}
and one axiom for rationality analogous to the one called in \cite{dB_PHD}
an ``implicit definition'' of rationality. We give these in Figure \ref{fig:proof},
where, crucially, $\psi$ \emph{is positive in $X$, and all the $\phi_i$'s are positive}.
We denote here by $\psi[X \mapsto \chi]$
the formula obtained from $\psi$ by substituting each occurrence of
the variable $X$ with the formula $\chi$.
Assuming given some standard proof rules for propositional reasoning,
we add the axioms and rule given in Figure \ref{fig:proof} to obtain the system \textbf{P}.

\begin{figure}[htbp]
  \begin{center}
    \leavevmode
\fbox{
\begin{tabular}{c}
Axiom schemata \eol

\begin{tabular}{ll}

$\rat_{\phi} \implies (\square \chi \implies \opt_{\phi} \chi)$       & $\rat Dis$  \eol

$\nu X. \psi \implies \psi[X \mapsto \nu X. \psi] $     & $\nu Dis$ \eol

\end{tabular} \eol

Rule of inference \eol

$\infer[\nu Ind]{\chi \implies \nu X. \psi}{\chi \implies \psi[X \mapsto \chi]
}$ \eol

\end{tabular}
}
\caption{\label{fig:proof} Proof system \textbf{P}}
  \end{center}
\end{figure}

A formula is a \df{theorem} of a proof system if it is derivable from the axioms and rules of inference.
An ${\cal L}_\nu$-formula $\psi$ is \df{valid} if
for every belief model $(\Omega, \ldots)$ 
for $\Hgame$ we have
$\interp{\psi} = \Omega$.
We now establish the soundness of the proof system $\textbf{P}$, that is, that its theorems are valid.

\begin{lemma}
The proof system \textbf{P} is sound.
\end{lemma}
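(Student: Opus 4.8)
The plan is to prove soundness in the usual way: every theorem of \textbf{P} is built from the two axiom schemata by applications of $\nu Ind$ and the background propositional rules, so it suffices to show (a) both axiom schemata are valid in every belief model for $\Hgame$, (b) $\nu Ind$ preserves validity, and (c) the propositional rules are sound. Point (c) is immediate, since $\interpb{\cdot}{\cdot}$ interprets $\land$ and $\neg$ — hence the derived $\implies$ and $\lor$ — as intersection and complement in the Boolean algebra $\mathcal{P}(\Omega)$, and a formula is valid exactly when its interpretation is all of $\Omega$; in particular $\interpb{\alpha \implies \beta}{E} = \Omega$ iff $\interpb{\alpha}{E} \subseteq \interpb{\beta}{E}$. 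The one piece of infrastructure used throughout is a substitution lemma: for $\chi$ with no free occurrence of $X$ (the only case needed, since in the applications below $\chi$ is closed in $X$), $\interpb{\psi[X \mapsto \chi]}{E} = \interpb{\psi}{\interp{\chi}}$. This is proved by a routine induction on $\psi$, the only non-trivial clause being $\nu X$, where both sides are independent of the padded argument.

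For the axiom $\rat Dis$, namely $\rat_\phi \implies (\square \chi \implies \opt_\phi \chi)$, I would argue player by player. Fix $i \in [1..n]$, an event $E$, and a state $\omega \in \interpb{\rat_{\phi_i}}{E} \cap \interpb{\square_i \chi}{E}$. By the interpretation clauses, $\phi_i(\st_i(\omega), \Hgame_{P_i(\omega)})$ holds and $P_i(\omega) \subseteq \interpb{\chi}{E}$. The map $F \mapsto \Hgame_F$ on events is monotonic, so $\Hgame_{P_i(\omega)} \subseteq \Hgame_{\interpb{\chi}{E}}$ componentwise. Since $\phi_i$ is positive, the monadic predicate $C$ occurs in it only positively, and therefore the satisfaction of $\phi_i$ is monotone in the restriction interpreting $C$; this is the $\mathcal{L}_O$-level statement underlying the remark in Section~\ref{sec:prelim} that a positive $\phi_i$ yields a monotonic $O_\phi$, and one proves it by an easy induction on $\phi_i$. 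Hence $\phi_i(\st_i(\omega), \Hgame_{\interpb{\chi}{E}})$ holds, i.e.\ $\omega \in \interpb{\opt_{\phi_i} \chi}{E}$. Intersecting over $i$ gives $\interpb{\rat_\phi}{E} \cap \interpb{\square \chi}{E} \subseteq \interpb{\opt_\phi \chi}{E}$ for every $E$, which is the validity of $\rat Dis$.

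For $\nu Dis$ and $\nu Ind$ I would first record that the side conditions of Figure~\ref{fig:proof} — $\psi$ positive in $X$ and every $\phi_i$ positive — make the operator $G := \interpb{\psi}{\cdot}$ on $\mathcal{P}(\Omega)$ monotonic, by the observation that positive formulas define monotonic operators. Then $\overline{G}(F) = G(F) \cap F = \interpb{\psi \wedge X}{F}$, so by definition $\interp{\nu X. \psi} = \overline{G}^{\infty}$, which equals $G^{\infty} = \nu G$ by Note~\ref{note:con}$(iii)$ and Tarski's Fixpoint Theorem; thus $\interp{\nu X. \psi}$ is the greatest fixpoint of $G$ and, as already recorded in the text, $\interp{\nu X. \psi} = \bigcup \{ F \subseteq \Omega \mid F \subseteq \interpb{\psi}{F} \}$. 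Now $\nu Dis$: by the substitution lemma and the fixpoint property, $\interp{\psi[X \mapsto \nu X. \psi]} = \interpb{\psi}{\interp{\nu X. \psi}} = G(\interp{\nu X. \psi}) = \interp{\nu X. \psi}$, so the axiom is valid (indeed as an equivalence). And $\nu Ind$: if $\chi \implies \psi[X \mapsto \chi]$ is valid, then by the substitution lemma $\interp{\chi} \subseteq \interp{\psi[X \mapsto \chi]} = G(\interp{\chi})$, so $\interp{\chi}$ is a post-fixpoint of $G$ and hence $\interp{\chi} \subseteq \bigcup \{ F \mid F \subseteq \interpb{\psi}{F} \} = \interp{\nu X. \psi}$, which is the validity of $\chi \implies \nu X. \psi$.

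The content-bearing, and so most delicate, steps are the two auxiliary lemmas: the substitution lemma — where, with only the single set variable $X$ available, it is worth spelling out that substitution is meant capture-avoiding and that the formulas occurring in derivations are closed in $X$ — and the positivity-implies-monotonicity lemma for $\mathcal{L}_O$ that drives $\rat Dis$. The remainder is bookkeeping with the semantic clauses, plus one appeal each to Note~\ref{note:con}$(iii)$, Tarski's Fixpoint Theorem, and the post-fixpoint characterisation of $\interp{\nu X. \psi}$. It is worth emphasising that the side conditions of Figure~\ref{fig:proof} are used in exactly one place — to make $G$ monotonic, so Tarski applies (for $\nu Dis$ and $\nu Ind$), and to make $\phi_i$ monotone in $C$ (for $\rat Dis$) — which is precisely why dropping monotonicity would break the soundness of these rules.
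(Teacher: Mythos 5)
Your proof is correct and follows essentially the same route as the paper's: the $\rat Dis$ case is handled by the identical take-a-state-and-use-monotonicity-in-$C$ argument, and $\nu Dis$ and $\nu Ind$ are justified exactly as the two consequences of Tarski's Fixpoint Theorem that the paper cites. You merely make explicit the substitution lemma and the identification of $\interp{\nu X. \psi}$ with the greatest fixpoint of $\interpb{\psi}{\cdot}$ via Note~\ref{note:con}$(iii)$ and Tarski, details the paper leaves implicit.
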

\begin{proof}
We show the validity of the axiom $\rat Dis$:

Let $(\Omega, \st_1, \ldots, \st_n, P_i, \ldots, P_n)$ be a belief model for $\Hgame$.
We must show that  $\interp{\rat_{\phi} \implies (\square \chi \implies \opt_{\phi} \chi)} = \Omega$.
That is, that for any $\chi$ the inclusion $\interp{\rat_{\phi}} \cap \interp{\square \chi} \subseteq \interp{\opt_{\phi} \chi}$ holds.
So take some $\omega \in \interp{\rat_{\phi}} \cap \interp{\square \chi}$.
Then for every $i \in [1..n]$, $\phi_i(\st_i(\omega), \Hgame_{P_i(\omega)})$,
and $P_i(\omega) \subseteq \interp{\chi}$.
So by monotonicity of $\phi_i$, $\phi_i(\st_i(\omega), \Hgame_{\interp{\chi}})$, i.e.~$\omega \in \interp{\opt_{\phi_i} \chi}$ as required.

The axioms $\nu Dis$ and the rule $\nu Ind$ were introduced in \cite{K_Mu_Ax};
they formalise, respectively, the following two consequences of Tarski's Fixpoint Theorem
concerning a monotonic operator $F$:
\begin{itemize}
\item $\nu F$ is a post-fixpoint of $F$, i.e., $\nu F \subseteq F(\nu F)$ holds,

\item if $Y$ is a post-fixpoint of $F$, i.e., $Y \subseteq F(Y)$, then
$Y \subseteq \nu F$.
\end{itemize}
\end{proof}

Next, we establish the already announced claim.
\begin{theorem}
\label{thm:main}
The formula (\ref{eqn:syntax2}) is a theorem of the proof system \textbf{P}.
\end{theorem}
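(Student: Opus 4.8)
The plan is to apply the induction rule $\nu Ind$ to the formula $\opt_\phi X$, taking as ``invariant'' the antecedent of (\ref{eqn:syntax2}) itself. Since $\square^* \rat_\phi$ abbreviates $\nu X. \square(X \land \rat_\phi)$, I would put $\chi := \square^* \rat_\phi \land \rat_\phi$ (which is propositionally equivalent to the antecedent $\rat_\phi \land \square^* \rat_\phi$) and $\psi := \opt_\phi X$. The formula $\psi$ is positive in $X$: the variable $X$ lies under no negation, and under the operators $\opt_{\phi_i}$ only with each $\phi_i$ positive, which holds by the standing assumption of this section. Hence $\nu Ind$ is applicable, and to obtain $\chi \implies \nu X. \opt_\phi X$ it suffices to derive the premise $\chi \implies \psi[X \mapsto \chi]$, that is, $\chi \implies \opt_\phi(\square^* \rat_\phi \land \rat_\phi)$.

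To derive that premise I would first instantiate the axiom $\rat Dis$ with $\chi' := \square^* \rat_\phi \land \rat_\phi$, yielding
\[
\rat_\phi \implies \bigl(\square(\square^* \rat_\phi \land \rat_\phi) \implies \opt_\phi(\square^* \rat_\phi \land \rat_\phi)\bigr).
\]
Since $\rat_\phi$ is a conjunct of $\chi$, it remains only to derive $\chi \implies \square(\square^* \rat_\phi \land \rat_\phi)$. But this is exactly one unfolding of the common-belief fixpoint: instantiating $\nu Dis$ with $\psi := \square(X \land \rat_\phi)$ (again positive in $X$) gives
\[
\square^* \rat_\phi \implies \square(\square^* \rat_\phi \land \rat_\phi),
\]
and $\square^* \rat_\phi$ is the other conjunct of $\chi$. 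Propositional reasoning then combines these implications into $\chi \implies \opt_\phi(\square^* \rat_\phi \land \rat_\phi)$, so $\nu Ind$ yields $\chi \implies \nu X. \opt_\phi X$, and rewriting $\chi$ propositionally gives (\ref{eqn:syntax2}).

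Conceptually there is little obstacle once the invariant has been spotted: the entire content of the argument is the single ``one-step'' unfolding of $\square^* \rat_\phi$ via $\nu Dis$, which supplies precisely the $\square$-premise that $\rat Dis$ consumes in order to convert a belief in a restriction into optimality with respect to that restriction. I expect the only care needed to be clerical: checking that both applications of the $\nu$-axioms are to formulas positive in $X$ (immediate in each case, the $\phi_i$'s being positive by hypothesis), and computing the substitution $\psi[X \mapsto \chi]$ precisely enough that the premise required by $\nu Ind$ matches the conclusion of the $\rat Dis$ instance verbatim. Note that ordering the conjuncts of $\chi$ as $\square^* \rat_\phi \land \rat_\phi$, rather than the other way round, makes the $\nu Dis$ instance land on exactly the shape needed for $\rat Dis$, so that only a top-level propositional equivalence is invoked at the very end.
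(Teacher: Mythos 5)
Your proposal is correct and is essentially identical to the paper's own derivation: the same instantiations of $\rat Dis$ (with $\square^* \rat_{\phi} \land \rat_{\phi}$) and $\nu Dis$ (with $\square(X \land \rat_{\phi})$), combined propositionally and closed off by $\nu Ind$ with the same invariant $\chi := \square^* \rat_{\phi} \land \rat_{\phi}$ and $\psi := \opt_{\phi} X$. The only difference is presentational (goal-directed rather than forward), plus your explicit positivity checks, which the paper leaves implicit.
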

\begin{proof}
The following formulas 
are instances of the axioms $\rat Dis$ (with $\psi := \square^* \rat_{\phi} \land \rat_{\phi}$) and $\nu Dis$ (with $\psi := \square (X \land \rat_{\phi})$) respectively: \\
\begin{eqnarray}
 \rat_{\phi} & \implies & (\square (\square^* \rat_{\phi} \land \rat_{\phi}) \implies \opt_{\phi}(\square^* \rat_{\phi} \land \rat_{\phi})), \\
 \square^* \rat_{\phi} & \implies & \square((\square^* \rat_{\phi}) \land \rat_{\phi}).
\end{eqnarray}
Putting these two together via some propositional logic, we obtain
\[
((\square^* \rat_{\phi}) \land \rat_{\phi}) \implies \opt_{\phi} ((\square^* \rat_{\phi}) \land \rat_{\phi}),
\]
which is of the right shape to apply the rule $\nu Ind$
(with $\chi := \square^* \rat_{\phi} \land \rat_{\phi}$ and $\psi := \opt_{\phi} X$). We then obtain
\[ (\square^* \rat_{\phi} \land \rat_{\phi}) \implies \nu X. \opt_{\phi} X, \]
which is precisely the formula (\ref{eqn:syntax2}).
\end{proof}

\begin{corollary}
The formula (\ref{eqn:syntax2}) is valid. 
\end{corollary}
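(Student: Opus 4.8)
The plan is to combine the two results just established. By Theorem~\ref{thm:main} the formula~(\ref{eqn:syntax2}) is a theorem of the proof system \textbf{P}, i.e.~it is derivable from the axiom schemata $\rat Dis$ and $\nu Dis$ and the rule $\nu Ind$ (together with the assumed propositional rules). By the preceding soundness lemma every theorem of \textbf{P} is valid, that is, its interpretation equals the whole state space $\Omega$ in every belief model for $\Hgame$. Instantiating this with~(\ref{eqn:syntax2}) gives the corollary.

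The only point that needs attention is that the particular derivation used in the proof of Theorem~\ref{thm:main} stays inside \textbf{P}, whose axioms and rule carry the side conditions that $\psi$ be positive in $X$ and that all the $\phi_i$ be positive. But these are exactly the standing hypotheses of Theorem~\ref{thm:main}: the instance of $\nu Dis$ used there has $\psi := \square(X \land \rat_{\phi})$ and the application of $\nu Ind$ has $\psi := \opt_{\phi} X$, both of which are positive in $X$ precisely because each $\phi_i$ is assumed positive, while the instance of $\rat Dis$ imposes no further condition. So the soundness argument already covers every step of that derivation, and nothing further need be verified.

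Hence there is no real obstacle: the corollary is simply ``theoremhood implies validity'' applied to the concrete derivation exhibited in Theorem~\ref{thm:main}. Reading off what validity of~(\ref{eqn:syntax2}) means via Notes~\ref{note:expr1} and~\ref{note:expr2}, one recovers the semantic content of our first announced theorem for positive (hence monotonic) optimality conditions $\phi$: in every belief model, at every state at which all players are $\phi$-rational and this is common true belief, each player's chosen strategy survives the iterated elimination of non-$\phi$-optimal strategies.
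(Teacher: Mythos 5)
Your argument is correct and is exactly the intended one: the corollary follows by combining Theorem~\ref{thm:main} (theoremhood of (\ref{eqn:syntax2}) in \textbf{P}) with the soundness lemma, and your check that the derivation's instances of $\nu Dis$ and $\nu Ind$ respect the positivity side conditions is the only point needing care. Nothing further is required.
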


It is
interesting to note that no axioms or rules for the modalities
$\square$ or $\opt$ were needed in order to derive
(\ref{eqn:syntax2}), other than those connecting them with rationality.
In particular, no introspection is required on the part of the players,
nor indeed is the $K$ axiom $\square(\varphi \land \psi) \leftrightarrow (\square \varphi \land \square \psi)$ needed.

In the language ${\cal L}_\nu$, the $\rat_{\phi_i}$ are in effect propositional constants.
We might instead define them in terms of the
$\square_i$ and $\opt_{\phi_i}$ modalities but to this end we would need to extend
the language ${\cal L}_\nu$.
One way to do this is to use a quantified modal language, allowing
quantifiers over set variables, so extending $\mathcal{L}_\nu$ by allowing formulas of the form
$\forall X \varphi$.
Such quantified modal logics are studied in \cite{Fine_QML}.
It is straightforward to extend the semantics to this
larger class of formulas:  
\[
\interpb{\forall X \varphi}{E} = \{\omega \in \Omega \mid \forall F \subseteq \Omega, \, \omega \in \interpb{\varphi}{F} \}.
\]
In the resulting language each $\rat_{\phi_i}$
constant is definable by a formula of this second-order language:
\begin{equation} 
  \label{equ:2nd}
\rat_{\phi_i} \equiv \forall X (\square_i X \implies \opt_{\phi_i} X).
\end{equation}
The following observation then shows correctness of this definition.
\begin{note} For all $i \in [1..n]$
the formula (\ref{equ:2nd}) is valid in the semantics sketched.
\end{note}
To complete our proof-theoretic analysis we augment the proof system \textbf{P} with the following proof rule
where we assume that $\chi$ is
positive in $X$, but where $\psi$ is an arbitrary $\nu$-free $\mathcal{L}_\nu$-formula:
\[
\infer[Incl]{\nu X. \chi \implies \nu X. \psi}{\chi \implies \psi}
\]
The soundness of this rule is a direct consequence of Lemma \ref{lem:inc}.

To formalize the statement \textbf{Imp}
we need two optimality conditions, $gbr_{i}$ and $lsd_{i}$.

To link the proof systems for the languages $\mathcal{L}_O$
and $\mathcal{L}_\nu$ we add the following proof rule, where each $\phi_i$ and $\psi_i$ is an optimality condition in $\mathcal{L}_O$, and $\opt_\phi X \implies \opt_\psi X$ is a formula of $\mathcal{L}_\nu$.
\[
\infer[Link]{\opt_\phi X \implies \opt_\psi X}{\phi_i \implies \psi_i, i \in [1..n]}
\]
The soundness of this rule is a direct consequence of the semantics of the formulas $\opt_\phi X$ and  $\opt_\psi X$.

We denote the system obtained from \textbf{P} by adding to it the above two proof rules 
and standard first-order logic rules concerning the formulas in the language
$\mathcal{L}_O$, like
\[
\infer{\forall x \: \exists y \phi}{\exists y \: \forall x \phi}
\]
by \textbf{R}.
We can now formalize the statement \textbf{Imp} as follows:

\begin{equation}
  \label{eqn:imp}
 (\rat_{gbr} \land \square^* \rat_{gbr}) \implies \nu x. \opt_{lsd} x.  
\end{equation}

The following result then shows that this formula can be formally derived in the considered proof system.
\begin{theorem}
\label{thm:imp}
The formula (\ref{eqn:imp}) is a theorem of the proof system \textbf{R}.
\end{theorem}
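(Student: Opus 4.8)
The plan is to peel off Theorem~\ref{thm:main} and then bridge the remaining gap with one use each of the new rules $Link$ and $Incl$, the only real work being an essentially trivial first-order implication in $\mathcal{L}_O$. Since every $gbr_i$ is positive, Theorem~\ref{thm:main} applied with $\phi := gbr$ already gives that
\[
(\rat_{gbr} \land \square^* \rat_{gbr}) \implies \nu X. \opt_{gbr} X
\]
is a theorem of $\textbf{P}$, and hence of $\textbf{R}$. Comparing this with (\ref{eqn:imp}) (where the bound variable written $x$ is of course the single set variable $X$ of $\mathcal{L}_\nu$), it therefore suffices to derive in $\textbf{R}$ the implication $\nu X. \opt_{gbr} X \implies \nu X. \opt_{lsd} X$ and then chain the two implications by propositional reasoning.

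For that, I would first derive, for each player $i$, the $\mathcal{L}_O$-implication $gbr_i \implies lsd_i$ using the standard first-order rules included in $\textbf{R}$. Unfolding the abbreviations this reads
\[
\bigl(\exists z\,(C(z) \land \forall y\; o \geq_z^i y)\bigr) \implies \bigl(\forall y\,(C(y) \implies \exists z\,(C(z) \land o \geq_z^i y))\bigr),
\]
and it is immediate: taking a witness $z_0$ for the antecedent, one has $C(z_0)$ and, for every $y$, $o \geq_{z_0}^i y$; so for an arbitrary $y$ the same $z_0$ witnesses the inner existential, whence (a fortiori, whether or not $C(y)$ holds) the consequent. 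In game-theoretic terms this is just the observation that a global best response in the context of $C$ is, in particular, not locally strictly dominated in the context of $C$; note that the antecedent $gbr_i$ is logically the stronger statement, its universal quantifier over $y$ being unbounded where $lsd_i$ bounds it by $C$, so no real content is lost. Formally, this is routine intuitionistic first-order manipulation (existential elimination, instantiation, $\theta \implies (C(y)\implies\theta)$, and generalisation), and it is available among the first-order rules of $\textbf{R}$ — the quantifier-exchange rule displayed in the definition of $\textbf{R}$ is already more than enough for the one slightly non-obvious move.

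Feeding the $n$ premises $gbr_i \implies lsd_i$ into the rule $Link$ then yields $\opt_{gbr} X \implies \opt_{lsd} X$ as a theorem of $\textbf{R}$. I then apply the rule $Incl$ with $\chi := \opt_{gbr} X$ and $\psi := \opt_{lsd} X$: the side condition is satisfied because $\chi$ is positive in $X$ — each occurrence of $X$ lies, with no intervening negation, directly under an optimality operator $\opt_{gbr_i}$ with $gbr_i$ positive — while $\psi$ is manifestly $\nu$-free. It is exactly the point of $Incl$, whose soundness rests on Lemma~\ref{lem:inc}, that $\opt_{lsd} X$ itself need \emph{not} be positive in $X$ (indeed it is not, as $lsd_i$ is not positive, so $\nu X.\opt_{lsd} X$ denotes the outcome of the corresponding contracting operator). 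We obtain $\nu X. \opt_{gbr} X \implies \nu X. \opt_{lsd} X$, and combining this with the implication furnished by Theorem~\ref{thm:main} gives (\ref{eqn:imp}).

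Thus the derivation is a short chain with no genuine obstacle: the single point requiring any attention is the first-order implication $gbr_i \implies lsd_i$, and there the difficulty is only bookkeeping — checking that this evidently valid schema can be produced using the first-order apparatus actually present in $\textbf{R}$ — rather than any mathematical subtlety. The conceptual content of the theorem is entirely carried by the two new rules and by the fact, already recorded, that $gbr_i$ is positive whereas $lsd_i$ need not be.
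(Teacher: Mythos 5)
Your proposal is correct and follows essentially the same route as the paper: instantiate Theorem~\ref{thm:main} with $\phi := gbr$, establish $gbr_i \implies lsd_i$ in $\mathcal{L}_O$, and pass through $Link$ and $Incl$ to get $\nu X.\opt_{gbr}X \implies \nu X.\opt_{lsd}X$. You are in fact slightly more explicit than the paper's own (compressed) proof, which folds the $Incl$ step into its invocation of $Link$; your separation of the two rules and your check of the positivity side condition for $Incl$ are exactly right.
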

\begin{proof}
The properties $gbr_i$ are monotonic, so the following implication
is an instance of (\ref{eqn:syntax2}):
\[(\rat_{gbr} \land \square^* \rat_{gbr}) \implies \nu x. \opt_{gbr} x.\]

Further, since the implication $gbr_i \implies lsd_i$ holds,
we get by the \emph{Link} rule
\[\nu x. \opt_{gbr} x \implies \nu x. \opt_{lsd} x,\]
from which (\ref{eqn:imp}) follows.
\end{proof}

\begin{corollary}
The formula (\ref{eqn:imp}) is valid.
\end{corollary}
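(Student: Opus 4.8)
The plan is to derive the Corollary directly from Theorem~\ref{thm:imp} together with the soundness of the proof system \textbf{R}. Since Theorem~\ref{thm:imp} tells us that (\ref{eqn:imp}) is a theorem of \textbf{R}, it suffices to show that every theorem of \textbf{R} is valid; this follows by a routine induction on the length of a derivation, once we know that all axioms of \textbf{R} are valid and that all its rules of inference preserve validity. Propositional reasoning trivially preserves validity, so the only work is to check the non-propositional axioms and rules.

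The system \textbf{R} is obtained from \textbf{P} by adding the rules \emph{Incl} and \emph{Link} and standard first-order rules for $\mathcal{L}_O$, and no new axioms. Soundness of \textbf{P} has already been established, so it remains to check these three additions. For \emph{Link}: if $\phi_i \implies \psi_i$ is valid in $\mathcal{L}_O$ for every $i$, then by the semantic clause for the optimality modality, for every event $E$ and every state $\omega$ we have $(\Hgame,\Hgame_E,\st_i(\omega)) \models \phi_i$ only if $(\Hgame,\Hgame_E,\st_i(\omega)) \models \psi_i$; hence $\interpb{\opt_{\phi_i} X}{E} \subseteq \interpb{\opt_{\psi_i} X}{E}$ for all $i$ and all $E$, which is exactly the validity of $\opt_\phi X \implies \opt_\psi X$. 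For the first-order rules, soundness is just soundness of classical first-order logic read off the $\mathcal{L}_O$-semantics (e.g.\ $\exists y\,\forall x\,\phi \implies \forall x\,\exists y\,\phi$ is a first-order validity).

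The one case needing a little care is \emph{Incl}, where Lemma~\ref{lem:inc} supplies the argument. Suppose $\chi \implies \psi$ is valid, with $\chi$ positive in $X$ and $\psi$ an arbitrary $\nu$-free formula. Set $O_1 := \interpb{\chi}{\cdot}$ and $O_2 := \interpb{\psi}{\cdot}$, operators on ${\cal P}(\Omega)$. Positivity of $\chi$ in $X$ makes $O_1$ monotonic, and validity of $\chi \implies \psi$ gives $O_1(E) \subseteq O_2(E)$ for every $E$. By the semantic clause for $\nu X$ and the identity $\interpb{\theta \wedge X}{E} = \interpb{\theta}{E} \cap E$, we have $\interp{\nu X.\chi} = \overline{O_1}^{\infty}$, which equals $O_1^{\infty}$ by Note~\ref{note:con}$(iii)$ since $O_1$ is monotonic; likewise $\interp{\nu X.\psi} = \overline{O_2}^{\infty}$. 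Lemma~\ref{lem:inc} then yields $O_1^{\infty} \subseteq \overline{O_2}^{\infty}$, i.e.\ $\interp{\nu X.\chi} \subseteq \interp{\nu X.\psi}$, which is the validity of the conclusion of \emph{Incl}. This completes the soundness of \textbf{R}, and applying it to the theorem (\ref{eqn:imp}) given by Theorem~\ref{thm:imp} finishes the proof. The only mildly delicate point is the bookkeeping in the \emph{Incl} case — matching the contracting operators $\interpb{\theta \wedge X}{\cdot}$ that drive the semantics of $\nu X$ with the monotonic-versus-contracting outcomes compared in Lemma~\ref{lem:inc} — and even that is immediate once Note~\ref{note:con}$(iii)$ is invoked; there is no substantive obstacle.
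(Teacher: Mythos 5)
Your proof is correct and follows exactly the route the paper intends: the corollary is immediate from Theorem~\ref{thm:imp} together with soundness of \textbf{R}, where the paper merely asserts the soundness of \emph{Incl} (via Lemma~\ref{lem:inc}) and \emph{Link} (via the semantics of $\opt_\phi X$) and you correctly fill in those routine verifications, including the right use of Note~\ref{note:con}$(iii)$ to reconcile $O_1^{\infty}$ with $\overline{O_1}^{\infty}$.
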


\section{Summary}

We have studied the logical form of epistemic characterisation results, for arbitrary (including infinite) strategic games, of the form ``common knowledge of $\phi$-rationality entails playing according to the iterated elimination of non-$\phi'$ properties''.
A main contribution of this work is in revealing,
by giving syntactic proofs,
the reasoning principles involved in two cases: firstly when $\phi = \phi'$ (Theorem \ref{thm:main}), and secondly when $\phi$ entails $\phi'$ (Theorem \ref{thm:imp}).
In each case the result holds when $\phi$ is monotonic.
The language $\mathcal{L}_\nu$ that we used to formalise this reasoning is to our knowledge novel in combining optimality operators with fixpoint notions.  Such a combination is natural when studying such characterisation results, since common knowledge and iterated elimination are both fixpoint notions.

The language $\mathcal{L}_\nu$ is parametric in the optimality conditions used by players.
It is therefore built on the top of a first-order
language $\mathcal{L}_O$ used to define syntactically
optimality conditions relevant for our analysis.

\bibliography{zaclima}

\begin{thebibliography}{10}
\providecommand{\url}[1]{\texttt{#1}}
\providecommand{\urlprefix}{URL }

\bibitem{Apt07c}
Apt, K.R.: Relative strength of strategy elimination procedures. Economics
  Bulletin  3(21),  1--9 (2007), available from
  \url{http://economicsbulletin.vanderbilt.edu/Abstract.asp?PaperID=EB-07C7001%
5}

\bibitem{Apt_ManyFaces}
Apt, K.R.: The many faces of rationalizability. Berkeley Electronic Journal of
  Theoretical Economics  7(1) (2007), 38 pages

\bibitem{AB_Nash}
Aumann, R.J., Brandenburger, A.: Epistemic conditions for nash equilibrium.
  Econometrica  63(5),  1161--1180 (1995)

\bibitem{BB_Survey}
Battigalli, P., Bonanno, G.: Recent results on belief, knowledge and the
  epistemic foundations of game theory. Research in Economics  53,  149--225
  (1999)

\bibitem{vB_RatDyn}
Benthem, J.v.: Rational dynamics and epistemic logic in games. International
  Game Theory Review  9(1),  13--45 (2007), (Erratum reprint, 9(2), 377--409)

\bibitem{Bernheim}
Bernheim, B.D.: Rationalizable strategic behavior. Econometrica  52,
  1007--1028 (1984)

\bibitem{dB_PHD}
Bruin, B.d.: Explaining Games: On the logic of game theoretic explanations.
  Ph.D. thesis, ILLC, Amsterdam (2004)

\bibitem{DGK_Inflate}
Dawar, A., Gr{\"a}del, E., Kreutzer, S.: Inflationary fixed points in modal
  logics. ACM Transactions on Computational Logic (TOCL  5(2),  282 -- 315
  (2004)

\bibitem{FHMV_RAK}
Fagin, R., Halpern, J.Y., Vardi, M., Moses, Y.: Reasoning about knowledge. MIT
  Press, Cambridge, MA (1995)

\bibitem{Fine_QML}
Fine, K.: Propositional quantifiers in modal logic. Theoria  36,  336--346
  (1970)

\bibitem{K_Mu_Ax}
Kozen, D.: Results on the propositional mu-calculus. Theoretical Computer
  Science  27(3),  333--354 (1983)

\bibitem{Lipman_CKRat}
Lipman, B.L.: A note on the implications of common knowledge of rationality.
  Games and Economic Behaviour  6,  114--129 (1994)

\bibitem{OR_GT}
Osborne, M.J., Rubinstein, A.: A Course in Game Theory. MIT Press, Cambridge,
  MA (1994)

\bibitem{Tarski_Fixpoint}
Tarski, A.: A lattice-theoretical fixpoint theorem and its applications.
  Pacific Journal of Mathematics  5,  285--309 (1955)

\end{thebibliography}
\bibliographystyle{splncs}

\end{document}